\newtheorem{lemma}{Lemma}
\newcommand{\E}{\mathbb{E}}
\renewcommand{\P}{\mathbb{P}}
\newcommand{\R}{\mathbb{R}}
\newcommand{\K}{\mathbf{K}}
\renewcommand{\d}{\,\mathrm{d}}
\def\@IEEEproof[#1]{\par\noindent{\itshape #1: }}
\begin{document}
\IEEEoverridecommandlockouts

\title{\hbox{\Huge \!\!\!\!Optimal Transportation to the Entropy-Power Inequality}} % and a Reverse Inequality}
\author{\IEEEauthorblockN{Olivier Rioul}
\IEEEauthorblockA{LTCI, T\'el\'ecom ParisTech, Universit\'e Paris-Saclay, 75013 Paris, France\\
Email: olivier.rioul@telecom-paristech.fr}
}

\maketitle

\begin{abstract}
We present a simple proof of the entropy-power inequality using an optimal transportation argument which takes the form of a simple change of variables. The same argument yields a reverse inequality involving a conditional differential entropy which has its own interest. It can also be generalized in various ways. The equality case is easily captured by this method and the proof is formally identical in one and several dimensions. 
\end{abstract}
% no keywords

\section{Introduction}

The entropy-power inequality gives a lower bound on the differential entropy of a sum of independent random vectors in terms of their individual differential entropies, and is perhaps the most fascinating inequality that was stated by Shannon in his 1948 seminal paper~\cite{shannon48}. %It is proven to be very useful for establishing converse coding theorems ...
To simplify the presentation we assume, without loss of generality, that all considered random vectors have \emph{zero mean} and we first restrict ourselves to real-valued random \emph{variables} in one dimension.

Letting $P(X)=\E\{X^2\}$ be the (average) power of a random variable $X$, Shannon defined the \emph{entropy-power} $N(X)$ as the \emph{power} of a Gaussian random variable $X^*$ having the same \emph{entropy} as $X$. He argued~\cite[\S~21]{shannon48} that for continuous random variables it is more convenient to work with the entropy-power $N(X)$ than with the differential entropy $h(X)$.

By the well-known formula $h(X^*)=\frac{1}{2}\log \bigl(2\pi e P(X^*)\bigr)$ of the entropy of the Gaussian $X^*$, the closed-form expression of $N(X)=P(X^*)$ when $h(X^*)=h(X)$ is
\begin{equation}\label{EP}
N(X)= \frac{e^{2 h(X)} }{2\pi e}
\end{equation}
which is essentially $e$ to the \emph{power} twice the \emph{entropy} of~$X$, also the ``entropy power'' of $X$ in this sense. Since the Gaussian maximizes entropy for a given power: $h(X) \leq \frac{1}{2}\log \bigl(2\pi e P(X)\bigr)$, the entropy-power does not exceed the actual power: $N(X)\leq P(X)$ with equality if and only if $X$ is Gaussian.

A basic property of the entropy-power is the \emph{scaling} property. The power of a scaled random variable is given by $P(aX)=a^2 P(X)$, and the same property holds for the entropy-power:
\begin{equation}\label{EPscaling}
N(aX)=a^2N(X) 
\end{equation}
thanks to the  well-known scaling property of the entropy:
\begin{equation}\label{scaling}
h(aX)=h(X)+\log a \qquad (a>0).
\end{equation}

For any two \emph{independent} continuous random variables $X$ and $Y$, the power of the sum equals the sum of the individual powers: $P(X+Y)=P(X)+P(Y)$ and clearly the same relation holds for the entropy-power in the case of Gaussian variables. For non-Gaussian variables, however, the entropy-power of the sum exceeds the sum of the individual entropy-powers:
\begin{equation}\label{EPI}
N(X+Y) \geq N(X) + N(Y) 
\end{equation}
where equality holds only if $X$ and $Y$ are Gaussian. This is the celebrated entropy-power inequality (EPI) as stated by Shannon. It is remarkable that Shannon had the intuition of this inequality since it turns out to be quite difficult to prove. The first rigorous proof is due to Stam~\cite{stam59} more than ten years after Shannon's paper and is quite involved.

Thirty years after Shannon's paper, Lieb~\cite{lieb78} gave a very different proof of an equivalent entropy-power inequality that is more convenient to prove. By the scaling property~\eqref{EPscaling}, one has $N(\sqrt{\lambda} X)=\lambda N(X)$ for any $0<\lambda<1$, and the EPI~\eqref{EPI} is clearly equivalent to
\begin{equation}\label{Nconcave}
N(\sqrt{\lambda} X + \sqrt{1-\lambda} Y)  \geq \lambda N(X) + (1-\lambda) N(Y).
\end{equation}
Taking the logarithm on both sides it follows from the concavity of the logarithm that
\begin{equation}\label{hconcave}
h(\sqrt{\lambda} X + \sqrt{1-\lambda} Y)  \geq \lambda h(X) + (1-\lambda) h(Y).
\end{equation}
Conversely, to prove~\eqref{Nconcave} it is sufficient, by appropriately scaling the variables, to assume that $X$ and $Y$ have the same entropy power $N(X)=N(Y)$, hence the same entropy $h(X)=h(Y)$. In this case, taking the exponential on both sides of~\eqref{hconcave}, the r.h.s. becomes $(e^{2h(X)})^\lambda(e^{2h(Y)})^{1-\lambda}=\lambda e^{2h(X)} + (1-\lambda) e^{2h(Y)}$ which gives~\eqref{Nconcave}. Thus Lieb's restatement~\eqref{hconcave} is equivalent to the EPI. Equality holds in~\eqref{hconcave} if and only if $X$ and $Y$ are Gaussian with the same power.

Both~\eqref{Nconcave} and~\eqref{hconcave} have a nice interpretation~\cite{DemboCoverThomas91}: both the entropy-power $N$ and the entropy $h$ are \emph{concave} under the \emph{power-preserving} combination $\sqrt{\lambda} X + \sqrt{1-\lambda} Y$. That linear combination is power-preserving because if $X$ and $Y$ have the same power $P$, then $\sqrt{\lambda} X + \sqrt{1-\lambda} Y$ also has the same power $P$.

All available proofs of the EPI~\eqref{hconcave} can be seen as either variants of Stam's proof using a Gaussian perturbation argument (where the entropies are differentiated with respect to the power of an additive Gaussian noise), or variants of Lieb's proof using sharp inequalities from functional analysis such as Young's convolutional inequality (where the EPI is obtained as a limit case). In this paper, we present a new proof from~\cite{Rioul17} using a \emph{transportation} argument in which the Gaussian distribution is ``transported'' to another probability distribution by a simple change of variable. The idea is to relate~\eqref{hconcave} to the case of equality: let $X^*$, $Y^*$ be independent Gaussian 
with the same power, so that
\begin{equation}\label{hconcave*}
h(\sqrt{\lambda} X^* + \sqrt{1-\lambda} Y^*)  = \lambda h(X^*) + (1-\lambda) h(Y^*)
\end{equation}
A \emph{transportation} from $X^*$ to $X$, and similarly from $Y^*$ to $Y$, can be made to compare $h(X)$ to $h(X^*)$, $h(Y)$ to $h(Y^*)$, and also $h(\sqrt{\lambda} X + \sqrt{1-\lambda} Y)$ to $h(\sqrt{\lambda} X^* + \sqrt{1-\lambda} Y^*)$. This is described in the following section.

\section{Ingredients}

Hereafter we assume that the considered random variables have continuous and positive densities. This assumption can be made without loss of generality (see~\cite{Rioul17} for details). It follows that all considered cumulative distribution functions are continuously differentiable and (strictly) increasing.

The following lemma is the ``not Gaussian to Gaussian'' lemma~1 used in~\cite{RioulCosta16}:
\begin{lemma}[Transportation]\label{transport}
There exists an increasing function~$T$ such that $T(X^*)$ has the same distribution as~$X$. 
\end{lemma}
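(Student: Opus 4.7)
The plan is to invoke the classical one-dimensional quantile (inverse CDF) transport. Under the standing assumption that both $X$ and $X^*$ have continuous, strictly positive densities, their cumulative distribution functions $F_X$ and $F_{X^*}$ are continuously differentiable and strictly increasing on the real line, hence they are bijections from $\R$ (or their support) onto $(0,1)$ and admit continuous, strictly increasing inverses.

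With this in hand, I would simply define
\[
T \;=\; F_X^{-1} \circ F_{X^*},
\]
and verify the two required properties. For monotonicity, $T$ is the composition of two strictly increasing functions, hence strictly increasing. For the distributional identity, I would use the probability integral transform: $U := F_{X^*}(X^*)$ is uniformly distributed on $(0,1)$, so $T(X^*) = F_X^{-1}(U)$ has cumulative distribution function $F_X$ by the standard inverse-transform argument, i.e.\ the same distribution as $X$.

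There is essentially no obstacle here; the only subtle point is ensuring that $F_X$ and $F_{X^*}$ really are continuous bijections so that $F_X^{-1}$ is well-defined and continuous. This is exactly what the blanket assumption of continuous and positive densities made at the beginning of Section II delivers, so no additional regularization is required. It is worth noting, for use in the sequel, that differentiating the relation $F_X(T(x^*)) = F_{X^*}(x^*)$ yields
\[
f_X(T(x^*))\,T'(x^*) \;=\; f_{X^*}(x^*),
\]
so $T$ is in fact continuously differentiable with strictly positive derivative; this change-of-variables identity is presumably what the remainder of the paper will exploit to compare $h(X)$ with $h(X^*)$.
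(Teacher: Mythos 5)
Your proof is correct and uses the same transport map $T=F_X^{-1}\circ F_{X^*}$ as the paper; routing the verification through the uniform variable $U=F_{X^*}(X^*)$ is just a repackaging of the paper's direct CDF computation (the paper itself remarks that the uniform case is the classical inverse transform sampling method). The concluding observation $f_X(T(x^*))\,T'(x^*)=f_{X^*}(x^*)$ is a useful bonus that indeed anticipates Lemma~2.
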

\begin{IEEEproof}
Let $F_X$ denote the cumulative distribution function of $X$. Then  
$\P\{X\leq x\} = F_X(x)=F_{X^*}\bigl( F^{-1}_{X^*} (F_X(x))\bigr) =\P\{X^*\leq F^{-1}_{X^*} (F_X(x)\}= \P\{F_{X}^{-1}\bigl(F_{X^*}(X^*)\bigr)\leq x\}$ which proves the lemma with $T=F_{X}^{-1}\circ F_{X^*}$.
\end{IEEEproof}
Notice that the lemma is well-known when $X^*$ is uniformly distributed, to justify the inverse transform sampling method.

This function $T$ is sometimes referred to an ``optimal transport''~\cite{Villani08} because it solves a Monge-Kantorovitch transportation problem of the type:%\negmedskip
$$
\min_{\substack{(X,X^*)\\X\sim p_X, X^*\sim p_{X^*}}} \sqrt{\mathbb{E}\{(X-X^*)^2\}} % or any other convex distance
$$
where the marginal densities are fixed and the minimisation of the transportation cost is done on the joint distribution.
The resulting minimum is known as the Wasserstein distance $W_2(X,X^*)$. Thus $X=T(X^*)$ is the random variable which is maximally correlated to $X^*$ for fixed marginals; this is a restatement of a well-known Hardy-Littlewood rearrangement inequality and can be generalized to other convex cost functions. This type of optimality was used in~\cite{PolyanskiyWu16} to prove Costa's corner point conjecture for the Gaussian interference channel (see also~\cite{Rioul17b}). However, we shall \emph{not} need such an optimality property here.

By Lemma~\ref{transport}, to prove the EPI we can always assume that $X=T(X^*)$ using transport $T$, and similarly $Y=U(Y^*)$ using another transport $U$. Thus the EPI can be restated in terms of the Gaussian variables $X^*,Y^*$ as
\begin{multline}\label{hconcavegauss}
h(\sqrt{\lambda} T(X^*) + \sqrt{1-\lambda} U(Y^*))  \\\geq \lambda h(T(X^*)) + (1-\lambda) h(U(Y^*)).
\end{multline}

We have the following well-known lemma (also used in~\cite{RioulCosta16}).
\begin{lemma}[Change of Variable in the Entropy]\label{changevariable}
\begin{equation}
h(T(X))= h(X) + \E \{ \log T'(X) \} 
\end{equation}
where $T'>0$ denotes the derivative of $T$.
\end{lemma}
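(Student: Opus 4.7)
The plan is to compute $h(T(X))$ directly from the definition of differential entropy by applying the standard change-of-variables formula for densities, then recognize the extra term as the expectation of $\log T'(X)$.

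First, since $T$ is strictly increasing and continuously differentiable with $T'>0$, it is a diffeomorphism onto its image, so the density of $Y=T(X)$ at $y$ is obtained from the density of $X$ at $x=T^{-1}(y)$ by the usual Jacobian formula $p_Y(y)=p_X(T^{-1}(y))/T'(T^{-1}(y))$. I would state this briefly and then write $h(T(X))=-\int p_Y(y)\log p_Y(y)\d y$.

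Next, I would make the substitution $y=T(x)$, so that $\d y = T'(x)\d x$ and $p_Y(y)\d y = p_X(x)\d x$. The integral becomes
\begin{equation*}
h(T(X)) = -\int p_X(x) \log\frac{p_X(x)}{T'(x)}\d x,
\end{equation*}
and splitting the logarithm yields $h(T(X))= -\int p_X(x)\log p_X(x)\d x + \int p_X(x)\log T'(x)\d x$, which is exactly $h(X)+\E\{\log T'(X)\}$.

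There is no real obstacle here, since everything is routine once the change-of-variables formula is in hand; the only mild subtlety is justifying the substitution and ensuring that the expectation $\E\{\log T'(X)\}$ is well-defined, which follows from the blanket assumption in the paper that densities are continuous and positive (so $T'$ is continuous and positive) together with the implicit finiteness of the entropies involved.
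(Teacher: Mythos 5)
Your proof is correct and follows essentially the same route as the paper: both apply the change-of-variables formula $p_{T(X)}(T(x))\,T'(x)=p_X(x)$ inside the entropy integral and split the logarithm. The paper merely writes the same computation more compactly in expectation notation, so there is nothing to add.
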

\noindent For  linear $T(x)=ax$ we recover the scaling property~\eqref{EPscaling}. The general proof is similar:
\vspace*{-\medskipamount}
\begin{proof}
Make the change of variable $p_{T(X)}(T(x))\d T(x) = p_{X}(x) \d x$ in the expression of the entropy
$h(T(X))=-\E \{\log p_{T(X)}(T(X))\}=-\E\{\log (p_{X}(X)/T'(X) ) \}=h(X) \linebreak + \E \{ \log T'(X) \}$.
\end{proof}
%\vspace*{-\medskipamount}
Lemma~\ref{changevariable} allows one to evaluate the differences $h(T(X^*))-h(X^*)$ and $h(U(Y^*))-h(Y^*)$.
However, the remaining terms $h(\sqrt{\lambda} T(X^*) + \sqrt{1-\lambda} U(Y^*))$ and $h(\sqrt{\lambda} X^* + \sqrt{1-\lambda} Y^*)$ cannot be compared directly because two variables are involved instead of one.
However one variable can be fixed by conditioning and an extended version of Lemma~\ref{changevariable} can be used:
\begin{lemma}[Change of Variable in the Conditional Entropy]\label{changevariablecond}
\begin{equation}
h(T_Y(X)|Y)= h(X|Y) + \E \{ \log T'_Y(X) \}.
\end{equation}
\end{lemma}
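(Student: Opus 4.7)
The plan is to condition on $Y=y$, apply the one-variable Lemma~\ref{changevariable} pointwise, and then average over $y$. For each fixed value $y$, the map $x\mapsto T_y(x)$ is by hypothesis an increasing function of a single real variable with positive derivative $T'_y(x)$, so Lemma~\ref{changevariable} applies directly to the random variable whose density is the conditional density $p_{X|Y}(\cdot|y)$.

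Concretely, first I would write the conditional entropy in the standard integral form
$$h(X|Y)=\int p_Y(y)\,h(X|Y=y)\,\d y,$$
where the inner quantity $h(X|Y=y)$ denotes the differential entropy of $p_{X|Y}(\cdot|y)$. Next, applying Lemma~\ref{changevariable} to this conditional density with the increasing function $T_y$ gives, for each $y$,
$$h(T_y(X)|Y=y)=h(X|Y=y)+\E\{\log T'_y(X)\mid Y=y\}.$$
Finally, I would multiply both sides by $p_Y(y)$, integrate over $y$, and invoke the tower property of expectation on the last term to collect it into $\E\{\log T'_Y(X)\}$, with the left-hand side becoming $h(T_Y(X)|Y)$ by definition.

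The only point to watch — not really an obstacle — is the bookkeeping: the derivative $T'_Y$ in the statement must be read as the derivative of $T_y(\cdot)$ at the point $X$ with $Y=y$ held fixed, and $h(T_Y(X)|Y)$ must be read as the conditional entropy of the random variable $(X,Y)\mapsto T_Y(X)$ given $Y$. Both interpretations match the pointwise identity above under the continuity and positivity assumptions made at the start of Section~II, so the averaging step is routine. In short, this lemma is Lemma~\ref{changevariable} with a parameter, made unconditional by integrating against $p_Y$.
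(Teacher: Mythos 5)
Your proposal is correct and is exactly the paper's own argument: apply the one-variable change-of-variable lemma conditionally on $Y=y$ and then take the expectation over $Y$. The extra bookkeeping you spell out (the integral form of $h(X|Y)$ and the tower property) is just an expanded version of the same step.
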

\begin{proof}
By Lemma~\ref{changevariable}, we have $h(T_Y(X)|Y=y)= h(X|Y=y) + \E \{ \log T'_Y(X)|Y=y \}$ for a fixed value $Y=y$. The result follows by taking the expectation over $Y$.
\end{proof}
%\vspace*{-\medskipamount}
Using these ingredients, a simple proof of the EPI is obtained as shown in the next section.

\section{A Simple Proof of the EPI}

From Lemma~\ref{transport} we can assume that $X=T(X^*)$ using transport~$T$ and $Y=U(Y^*)$ using transport~$U$. By Lemma~\ref{changevariable},
\begin{equation}
\begin{aligned}
h(X)&=h(X^*) + \E \{ \log T'(X^*) \}\\
h(Y)&=h(Y^*) +  \E \{ \log U'(Y^*) \}.
\end{aligned}\label{transportxy} 
\end{equation}
It remains to compare $h(\sqrt{\lambda} X + \sqrt{1-\lambda} Y) =h(\sqrt{\lambda} T(X^*) + \sqrt{1-\lambda} U(Y^*))$ to $h(\sqrt{\lambda} X^* + \sqrt{1-\lambda} Y^*)$, which is the entropy if the Gaussian variable $\widetilde{X}=  \sqrt{\lambda} \,X^* + \sqrt{1-\lambda}\, Y^*$. Two independent variables are involved in the expression $\sqrt{\lambda} T(X^*) + \sqrt{1-\lambda} U(Y^*)$ which does not depend on $\widetilde{X}$ alone, but rather on the two variables $(\widetilde{X}, \widetilde{Y})$ obtained by rotation from $(X^*,Y^*)$:
\begin{equation}
\begin{pmatrix}
\widetilde{X} \\ \widetilde{Y} 
\end{pmatrix} = 
\begin{pmatrix}
\sqrt{\lambda} & \sqrt{1-\lambda} \\
-\sqrt{1-\lambda} & \sqrt{\lambda}
\end{pmatrix}
\begin{pmatrix}
X^* \\ Y^* 
\end{pmatrix} .
\end{equation}
The inverse rotation reads
\begin{equation}
\begin{pmatrix}
X^* \\ Y^* 
\end{pmatrix} = 
\begin{pmatrix}
\sqrt{\lambda} &- \sqrt{1-\lambda} \\
\sqrt{1-\lambda} & \sqrt{\lambda}
\end{pmatrix}
\begin{pmatrix}
\widetilde{X} \\ \widetilde{Y} 
\end{pmatrix}
\end{equation}
which gives 
$\sqrt{\lambda} T(X^*) + \sqrt{1-\lambda} U(Y^*)=
\sqrt{\lambda} T(\sqrt{\lambda} \widetilde{X}-\sqrt{1-\lambda}\widetilde{Y})  + \sqrt{1-\lambda}U(\sqrt{1-\lambda}\widetilde{X}+\sqrt{\lambda}\widetilde{Y})$, a function of $(\widetilde{X}, \widetilde{Y})$ which we denote by $T_{\widetilde{Y}}(\widetilde{X})$.  Now since conditioning reduces entropy,
\begin{equation}\label{cond}
h(\sqrt{\lambda} X + \sqrt{1-\lambda} Y)=h(T_{\widetilde{Y}}(\widetilde{X}))\geq h(T_{\widetilde{Y}}(\widetilde{X})|\widetilde{Y}).
\end{equation}
Lemma~\ref{changevariablecond} applies with
\begin{align}
T'_{\widetilde{Y}}(\widetilde{X})&={\lambda} T'\!(\sqrt{\lambda}\widetilde{X}\!-\!\sqrt{1\!-\!\lambda}\widetilde{Y}) \!+\! (1\!-\!\lambda)U'\!(\sqrt{1\!-\!\lambda}\widetilde{X}\!+\!\sqrt{\lambda}\widetilde{Y}) \notag\\
&=\lambda T'(X^*) + (1-\lambda) U'(Y^*)
\end{align}
which gives
\begin{equation}\label{cvcond}
 h(T_{\widetilde{Y}}(\widetilde{X})|\widetilde{Y})=h(\widetilde{X}|\widetilde{Y})+
  \E \{ \log \bigl(\lambda T'(X^*) + (1-\lambda) U'(Y^*)\bigr) \}.
\end{equation}
Since $X^*,Y^*$ are independent Gaussian with identical powers, so are the rotated variables $\widetilde{X}, \widetilde{Y}$. By independence,
\begin{equation}\label{indep}
h(\widetilde{X}|\widetilde{Y})=h(\widetilde{X})=h(\sqrt{\lambda} \,X^* + \sqrt{1-\lambda}\, Y^*).
\end{equation}
Therefore, combining~\eqref{cond},~\eqref{cvcond} and~\eqref{indep} we obtain
\begin{multline}
h(\sqrt{\lambda} X + \sqrt{1-\lambda} Y) \geq
 h(\sqrt{\lambda} \,X^* + \sqrt{1-\lambda}\, Y^*) \\+  \E \{ \log \bigl(\lambda T'(X^*) + (1-\lambda) U'(Y^*)\bigr) \} .
\end{multline}
With~\eqref{transportxy} we conclude that
\begin{equation}
\begin{aligned}
&h(\sqrt{\lambda} X + \sqrt{1-\lambda} Y)  - \lambda h(X) - (1-\lambda) h(Y) \\
&\geq 
h(\sqrt{\lambda} X^*\! + \sqrt{1-\lambda} Y^*\!)  - \lambda h(X^*\!) - (1-\lambda) h(Y^*\!) \\
&\hphantom{\geq }+ \E \{ \log \bigl(\lambda T'(X^*) + (1-\lambda) U'(Y^*)\bigr) \}\\
&\hphantom{\geq }- \lambda \E \{ \log T'(X^*) \}- (1-\lambda) \E \{ \log U'(Y^*) \}
\end{aligned}\label{final}
\end{equation}
where the first line in the r.h.s. vanishes by~\eqref{hconcave*} and the remaining part is $\geq 0$ by Jensen's inequality (concavity of the logarithm). This proves the EPI~\eqref{hconcave}.\qed

\section{The Equality Case}
The equality case is easily captured by the above method. If equality holds in~\eqref{final} then 
\begin{multline}
 \log \bigl(\lambda T'(X^*) + (1-\lambda) U'(Y^*)\bigr) \\=\lambda \E \{ \log T'(X^*) \}+ (1-\lambda) \E \{ \log U'(Y^*) \text{ a.e.}
\end{multline}
Because the logarithm is strictly concave and  $0<\lambda<1$, this implies
\begin{equation}
T'(X^*)= U'(Y^*) \text{ a.e.}
\end{equation}
Since $X^*$ and $Y^*$ are independent, it follows that $T'$ and $U'$ are constant and equal, hence $T$ and $U$ are linear and $X=c\cdot X^*$, $Y=c\cdot Y^*$ are Gaussian with the same power. This is the required equality case of the EPI~\eqref{hconcave}. Of course this condition also implies equality in~\eqref{cond} since then $\sqrt{\lambda} X + \sqrt{1-\lambda} Y=T_{\widetilde{Y}}(\widetilde{X})=c\widetilde{X}$ is independent of $\widetilde{Y}$.

\section{Generalization to random vectors}

The above proof of the EPI carries over \emph{verbatim} to random vectors in $n$ dimensions. The only change is that transport maps $T:\R^n\to\R^n$ are $n$-dimensional---accordingly, $T'$ denotes the \emph{Jacobian} determinant of $T$. Lemma~\ref{transport} is easily extended to random vectors using the so-called \emph{Knöthe's map} in the theory of convex bodies~\cite{Knothe57,Villani08}, of the form
\begin{equation}
T(x)= \bigl( T_1(x_1),T_2(x_1,x_2) , \ldots, T_n(x_1,\ldots,x_n) \bigr)
\end{equation}
where $x=(x_1,x_2,\ldots,x_n)\in\R^n$. The Jacobian matrix of $T$ is triangular with positive diagonal elements:
\begin{equation}
\begin{pmatrix}
\frac{\partial T_1}{\partial x_1} & 0 & \cdots & 0\\
\frac{\partial T_2}{\partial x_1} & \frac{\partial T_2}{\partial x_2} & \cdots & 0\\
\hdotsfor{4}\\
\frac{\partial T_n}{\partial x_1} & \frac{\partial T_n}{\partial x_2} & \cdots & \frac{\partial T_n}{\partial x_n}\\
\end{pmatrix}
\end{equation}
so that
\begin{equation}
T'(x)=\prod_{i=1}^n \frac{\partial T_i}{\partial x_i} >0.
\end{equation}
This transport map was used in~\cite{RioulCosta16} and details about its construction can also be found in~\cite{Rioul17}.
Lemmas~\ref{changevariable} and~\ref{changevariablecond} are then obtained by a change of variable in $n$ dimensions. The above proof of the EPI is identical word for word, where the concavity of the logarithm in the last step~\eqref{final} is used on each dimension.

\section{A Reverse EPI}

\subsection{Derivation: Generalization to non-Gaussian $X^*$ and $Y^*$}
The above proof of the EPI can also be generalized to the case where $X^*$ and $Y^*$ are \emph{not} necessarily Gaussian. In fact a closer look at the proof reveals that the Gaussian assumption is never used except for the simplification in~\eqref{indep} which relies on the independence of $\widetilde{X}=\sqrt{\lambda} X^*\! + \sqrt{1-\lambda} Y^*$ and $\widetilde{Y}=-\sqrt{1-\lambda} X^*+\sqrt{\lambda}Y^*$. If such an independence does not hold, we obtain the more general inequality
\begin{equation}
\begin{aligned}
&h(\sqrt{\lambda} X + \sqrt{1-\lambda} Y)  - \lambda h(X) - (1-\lambda) h(Y) \\
&\geq 
h(\sqrt{\lambda} X^*\! + \sqrt{1-\lambda} Y^* |-\sqrt{1-\lambda} X^*+\sqrt{\lambda}Y^* \!)\\
&\hphantom{\geq}\qquad - \lambda h(X^*\!) - (1-\lambda) h(Y^*\!) \\
\end{aligned}
\end{equation}
valid for any independent $X,Y$ and any independent $X^*,Y^*$. In fact this gives two independent inequalities: 
For Gaussian $X^*,Y^*$ the r.h.s. vanishes and we recover the classical EPI. But for Gaussian $X,Y$ the l.h.s. vanishes, so that the r.h.s. is $\leq 0$, and we obtain a \emph{reverse} inequality which (rewritten for $X,Y$) takes the form
\begin{equation}\label{reverse}
 h(\sqrt{\lambda} X\! + \sqrt{1\!-\!\lambda} Y |-\!\sqrt{1\!-\!\lambda} X+\sqrt{\lambda}Y \!) \leq \lambda h(X\!) + (1\!-\!\lambda) h(Y\!).
\end{equation}
Compared to~\eqref{hconcave}, the opposite inequality holds but for a conditional differential entropy. In other words, $\lambda h(X\!) + (1-\lambda) h(Y\!)$ is upper bounded by the differential entropy of $\sqrt{\lambda} X + \sqrt{1-\lambda} Y$ and lower bounded by its conditional differential entropy given $-\sqrt{1-\lambda} X+\sqrt{\lambda}Y$, the difference between the bounds being equal to the mutual information $I(\sqrt{\lambda} X + \sqrt{1-\lambda} Y;-\sqrt{1-\lambda} X+\sqrt{\lambda}Y)$. Thus an equivalent restatement is
\begin{equation}
\begin{split}
0 \leq 
h(\sqrt{\lambda}\, X + \sqrt{1-\lambda}\, Y) 
-  \lambda h(X) -(1-\lambda) h(Y)  \\
\leq I( \sqrt{\lambda} \,X + \sqrt{1-\lambda}\, Y ; -\sqrt{1-\lambda}\, X + \sqrt{\lambda} \,Y ).
 \end{split}\label{deficit}
\end{equation}
This mutual information can be seen as an upper bound on the \emph{deficit} in the EPI for $X$ and $Y$, which is zero if and only if $X$ and $Y$ are Gaussian with identical powers. Courtade~\cite{Courtade17} recently derived a similar bound on the deficit in the logarithmic Sobolev inequality, which is equivalent to another type of ``reverse EPI''. As above the extension to random vectors in $n$ dimensions is straightforward.

\subsection{The Equality Case and Bernstein's Lemma}

\begin{figure*}[!t]
% ensure that we have normalsize text
\normalsize
\begin{align*}
\K_{U|V} &=  \lambda \K_X + (1-\lambda) \K_Y - \lambda(1-\lambda) (\K_Y-\K_X)\bigl[(1-\lambda) \K_X + \lambda \K_Y\bigr]^{-1}(\K_Y-\K_X)
\\&= \lambda \K_X  - \lambda(1-\lambda) \K_X \bigl[(1-\lambda) \K_X + \lambda \K_Y\bigr]^{-1} \K_X 
+ \lambda(1-\lambda) \K_X \bigl[(1-\lambda) \K_X + \lambda \K_Y\bigr]^{-1} \K_Y\\
&\quad+(1-\lambda) \K_Y  - \lambda(1-\lambda) \K_Y \bigl[(1-\lambda) \K_X + \lambda \K_Y\bigr]^{-1} \K_Y
+  \lambda(1-\lambda) \K_Y \bigl[(1-\lambda) \K_X + \lambda \K_Y\bigr]^{-1} \K_X\\
&= \lambda \K_X\bigl[(1-\lambda) \K_X + \lambda \K_Y\bigr]^{-1} \bigl( (1-\lambda) \K_X + \lambda \K_Y - (1-\lambda) \K_X\bigr) + \lambda(1-\lambda) \K_X \bigl[(1-\lambda) \K_X + \lambda \K_Y\bigr]^{-1} \K_Y\\
&\quad+ (1-\lambda) \K_Y\bigl[(1-\lambda) \K_X + \lambda \K_Y\bigr]^{-1} \bigl( (1-\lambda) \K_X + \lambda \K_Y - \lambda \K_Y\bigr)+\lambda(1-\lambda) \K_Y \bigl[(1-\lambda) \K_X + \lambda \K_Y\bigr]^{-1} \K_X\\
&=
\bigl(\lambda^2 + \lambda(1-\lambda) + (1-\lambda)^2 + \lambda(1-\lambda)\bigr)
\bigl[\lambda \K_X^{-1} + (1-\lambda) \K_Y^{-1} \bigr]^{-1} = \bigl[\lambda \K_X^{-1} + (1-\lambda) \K_Y^{-1} \bigr]^{-1}.
\end{align*}
\hrulefill
%\vspace*{4pt}
\end{figure*}

We have seen that equality holds in the above proof of the EPI if and only if $(X,Y)$ and $(X^*,Y^*)$ are proportional. The same argument shows that the the same equality condition holds for the reverse EPI. Thus both the EPI~\eqref{hconcave} and its reverse~\eqref{reverse} are equalities if and only if $X$ and $Y$ are i.i.d. Gaussian. This also corresponds to the case where the mutual information vanishes in~\eqref{deficit}. This gives an alternative proof of Bernstein's lemma (see e.g.,~\cite[Appendix~I]{GengNair14} and~\cite[Chap.~5]{Bryc95}):
\begin{lemma}[Bernstein]
Let $X$ and $Y$ be independent. Then the rotated $\sqrt{\lambda} \,X + \sqrt{1-\lambda}\, Y$, $-\sqrt{1-\lambda}\, X + \sqrt{\lambda} \,Y$ are independent if and only if $X$, $Y$ are i.i.d. Gaussian.
\end{lemma}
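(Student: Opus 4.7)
The plan is to derive Bernstein's lemma as a one-line corollary of the sandwich inequality~\eqref{deficit} combined with the equality-case analysis of Section~IV.

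First I would dispose of the easy (``if'') direction. When $X$ and $Y$ are i.i.d.\ Gaussian, the joint density of $(X,Y)$ is a product of identical centered Gaussians, hence rotationally invariant. Any orthogonal change of variables therefore yields an independent pair with the same Gaussian marginals; applied to the particular rotation under consideration, this immediately gives that $\sqrt{\lambda}X+\sqrt{1-\lambda}Y$ and $-\sqrt{1-\lambda}X+\sqrt{\lambda}Y$ are independent.

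The substantive direction is the converse. Suppose $U=\sqrt{\lambda}X+\sqrt{1-\lambda}Y$ and $V=-\sqrt{1-\lambda}X+\sqrt{\lambda}Y$ are independent. Then $I(U;V)=0$, so the upper bound in the sandwich~\eqref{deficit} forces the EPI deficit to vanish:
\[
h(\sqrt{\lambda}X+\sqrt{1-\lambda}Y) = \lambda h(X) + (1-\lambda) h(Y),
\]
i.e.\ equality in the EPI~\eqref{hconcave}. The equality case analysis from Section~IV then forces $X$ and $Y$ to be Gaussian with identical power; since they are assumed independent, they are i.i.d.\ Gaussian, as required.

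I do not anticipate any genuine obstacle, since all of the analytic content has already been isolated in~\eqref{deficit} and in the equality characterization established in Section~IV. The only substantive move is the implication ``$I(U;V)=0 \Rightarrow$ equality in~\eqref{hconcave}'', and that is precisely what the upper bound of~\eqref{deficit} supplies. The conceptual point worth emphasizing is that the same transportation argument that produces the EPI simultaneously pins down both its equality case and Bernstein's characterization of the Gaussian.
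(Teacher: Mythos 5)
Your proposal is correct and follows essentially the same route as the paper: the author likewise deduces Bernstein's lemma from the sandwich~\eqref{deficit}, observing that the mutual information vanishes exactly when the EPI (and its reverse) hold with equality, which by the Section~IV analysis characterizes i.i.d.\ Gaussian $X,Y$. Your explicit treatment of the ``if'' direction via rotational invariance is a minor, harmless addition to what the paper leaves implicit.
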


\subsection{The Gaussian Case}

If $X$ and $Y$ are Gaussian with not necessarily equal powers $P(X)$ and $P(Y)$, it is easily seen that~\eqref{reverse} and~\eqref{hconcave} reduce to the harmonic/geometric/arithmetic inequalities
\begin{multline}
\bigl(\lambda P(X)^{-1} + (1-\lambda) P(Y)^{-1}\bigr)^{-1} \\\leq P(X)^\lambda P(Y)^{1-\lambda}
\\\leq \lambda P(X)+(1-\lambda) P(Y).
\end{multline}
More generally for Gaussian vectors, if $X\sim\mathcal{N}(0,\K_X)$ and $Y\sim\mathcal{N}(0,\K_Y)$ not necessarily of identical covariances, it is known~\cite[Thm.~8]{DemboCoverThomas91} that the EPI reduces to Ky Fan's concavity inequality of the log-determinant: using the well-known formula $h(U)=\frac{1}{2}\log \bigl((2\pi e)^n |\K_U|\bigr)$ the EPI reduces to
$\log |\lambda \K_X + (1-\lambda) \K_Y| \geq \lambda \log|\K_X| + (1-\lambda) \log |\K_Y|$. 

Similarly for the reverse EPI, noting that $h(U|V) = \frac{1}{2}\log \bigl((2\pi e)^n |\K_{U|V}|\bigr)$ where %the covariance matrix
 $\K_{U|V}$ is  Schur's complement $\K_{U|V}=\K_U - \K_{UV}\K_V^{-1} \K_{VU}$  (where $\K_{UV}$ is an intercovariance matrix), set $U=\sqrt{\lambda} \,X + \sqrt{1-\lambda} \,Y$ and $V=-\sqrt{1-\lambda} \,X + \sqrt{\lambda}\, Y$, $\K_U=\lambda \K_X + (1-\lambda) \K_Y$, $\K_V=(1-\lambda) \K_X + \lambda \K_Y$, and $\K_{UV}= \K_{VU}= \sqrt{\lambda(1-\lambda)} (\K_Y-\K_X)$. By the calculation shown at the top of this page, the reverse EPI  reduces to the inequality
$\log |  \lambda \K_X^{-1} + (1-\lambda) \K_Y^{-1} |^{-1} \leq \lambda \log|\K_X| + (1-\lambda) \log |\K_Y|$.
Thus ~\eqref{reverse} and~\eqref{hconcave} reduce to the generalized harmonic/geometric/arithmetic inequalities:
\begin{multline}
|  \lambda \K_X^{-1} + (1-\lambda) \K_Y^{-1} |^{-1} \\
\leq |\K_X|^\lambda |\K_Y|^{1-\lambda} 
\\\leq  |\lambda \K_X + (1-\lambda) \K_Y|.
\end{multline}

\subsection{Equivalence Between the EPI and its Reverse}  

As observed by Chandra Nair in a private communication to the author, it turns out that the reverse EPI is in fact equivalent to the EPI where the roles of $X$ and $Y$ are permuted. In fact~\eqref{reverse} is equivalent to
\begin{equation}
\begin{aligned}
h( \sqrt{\lambda} \,X &+ \sqrt{1-\lambda}\, Y , -\sqrt{1-\lambda}\, X + \sqrt{\lambda} \,Y ) \\
&\leq \lambda h(X) +(1-\lambda) h(Y) + h(-\sqrt{1-\lambda}\, X + \sqrt{\lambda} \,Y)
\end{aligned} 
\end{equation}
where the joint entropy in the r.h.s. equals $h(X,Y)=h(X)+h(Y)$ by the scaling property of the differential entropy for vectors. Reorganizing terms one obtains the following version of the EPI:
\begin{equation}
 (1-\lambda) h(X)+\lambda h(Y) \leq  h(-\sqrt{1-\lambda}\, X + \sqrt{\lambda} \,Y).
\end{equation}
We recover, in particular, that the cases of equality are the same for the EPI and its reverse. The above calculation was already used by Wang and Madiman~\cite{WangMadiman13} as a short proof of the EPI under the hypothesis that $X$ and $Y$ follow symmetrical and identical distributions. One reason why the EPI is equivalent to its reverse version is suggested below in relation to Young's convolutional inequality and its reverse.

\section{Zamir and Feder's Generalization to Linear Transformations}

An immediate generalization of the EPI~\eqref{hconcave} for $n$ independent variables $X_1,X_2,\ldots,X_n$ is
\begin{equation}
h\bigl(\sum_i a_i X_i \bigr) \geq \sum_i a_i^2 h(X_i) 
\end{equation}
where the coefficients are normalized such that $\sum_i a_i^2=1$. The above proof of the EPI can easily be adapted to prove this inequality directly by letting $\mathbf{A}$ be an orthogonal matrix whose first line is $(a_1,a_2,\ldots,a_n)$ and defining 
\begin{equation}
\widetilde{X} = \mathbf{A} {X^*}
\end{equation}
where $\widetilde{X^*}$ is a column vector of $n$ i.i.d. Gaussian variables $X^*_1,X^*_2,\ldots,X^*_n$.
The inverse transformation is $X^*=\mathbf{A\!}^t \widetilde{X}$ and the proof is easily modified along these lines.

Essentially the same proof can be used for Zamir and Feder's generalized EPI~\cite{ZamirFeder93} (see also~\cite[\S~IV]{Rioul11}):
\begin{equation}
h(\mathbf{A}X) \geq \sum_{i,j} a^2_{i,j} h(X_j) 
\end{equation}
where $X$ is the column vector of components $X_1,X_2,\ldots,X_n$ and $\mathbf{A}=(a_{i,j})$ is any real-valued (possibly rectangular) matrix with orthonormal rows. By adding orthonormal rows we form a square orthonormal matrix (still denoted by $\mathbf{A}$) and the same transformation $\widetilde{X} = \mathbf{A} {X^*}$, $X^*=\mathbf{A\!}^t \widetilde{X}$ is used. The conclusion follows from a simple inequality~\cite[Lemma~1]{GuoShamaiVerdu06} which generalizes Jensen's inequality for the logarithm.

\section{Generalization to R\'enyi Entropies and the Relation to Young's Inequality}

The R\'enyi entropy of order $p>0$ ($p\ne 1$) is defined as
\begin{equation}
h_p(X)=h_p(f)=\frac{1}{1-p}\log\int f^p = -p'\log \|f\|_p
\end{equation}
where $\|f\|_p$ denotes the $L^p$ norm of the density $f$ of $X$ and $p'$ is $p$'s conjugate such that $1/p+1/p'=1$. 
While the above proof of the EPI focuses on the equivalent inequality
\begin{equation}
\begin{aligned}
h&(\sqrt{\lambda} X + \sqrt{1\!-\!\lambda}\, Y) - \lambda h(X) -(1\!-\!\lambda) h(Y)\\
&\geq h(\sqrt{\lambda} X^* + \sqrt{1\!-\!\lambda}\, Y^*\!) - \lambda h(X^*\!) -(1\!-\!\lambda) h(Y^*\!) 
\end{aligned}
\end{equation}
for i.i.d. Gaussian $X^*,Y^*$, the natural generalization of the EPI considered in~\cite{DemboCoverThomas91} takes the form
\begin{equation}
\begin{aligned}
h&_r(\sqrt{\lambda} X + \sqrt{1\!-\!\lambda}\, Y) - \lambda h_p(X) -(1\!-\!\lambda) h_q(Y)\\
&\geq h_r(\sqrt{\lambda} X^* \!\!+\! \sqrt{1\!-\!\lambda}\, Y^*\!) \!-\! \lambda h_p(X^*\!) \!-\!(1\!-\!\lambda) h_q(Y^*\!) 
\end{aligned}\label{renyi}
\end{equation}
where $p,q,r$ are chosen such that $1/p'=\lambda/r'$ and $1/q'=(1-\lambda)/r'$ (so that $1/p'+1/q'=1/r'$), where $p'$, $q'$, and $r'$ are the conjugates of $p$, $q$ and $r$, respectively.

The EPI can then be obtained by letting $p,q,r\to 1$ as shown in~\cite{DemboCoverThomas91}. This is the preferred proof of the EPI in the classical textbook by Cover and Thomas~\cite{CoverThomas06}. Notice that because the EPI is obtained as a limit, the equality case is not settled by this method.

The above transportation proof can also be generalized, with the same transport maps $X=T(X^*)$ and $Y=U(Y^*)$, to prove~\eqref{renyi}. Now the R\'enyi EPI~\eqref{renyi} is in fact equivalent to the sharp Young's convolutional inequality or its reverse inequality~\cite{DemboCoverThomas91,CoverThomas06}. This is easily seen by noting that the 
R\'enyi entropy enjoys the same scaling property~\eqref{scaling} as the differential entropy  so that~\eqref{renyi} is equivalent to saying that
$h_r(\sqrt{\lambda} X + \sqrt{1\!-\!\lambda}\, Y) - \lambda h_p(\sqrt{\lambda}X) -(1\!-\!\lambda) h_q(\sqrt{1\!-\!\lambda}\, Y)$ (which equals $-r'\log\|f\ast g\|_r + r'\log \|f\|_p + r'\log \|g\|_q$ where $f$ and $g$ denote the densities of $\sqrt{\lambda}X$ and $\sqrt{1-\lambda}Y$) is minimized for i.i.d. Gaussian $X$ and $Y$. Dividing by $r'$ gives sharp Young's inequality for $p,q,r>1$ ($r'>0$):
\begin{equation}
\sqrt{\frac{r^{1/r}}{r'^{1/r'}}}\| f\ast g\|_r  \leq \sqrt{\frac{p^{1/p}}{p'^{1/p'}}}\|f\|_p \cdot \sqrt{\frac{q^{1/q}}{q'^{1/q'}}}\|g\|_q 
\end{equation}
and the \emph{reverse} Young's inequality for  $0<p,q,r<1$ ($r'<0$):
\begin{equation}
\sqrt{\frac{r^{1/r}}{|r'|^{1/r'}}}\| f\ast g\|_r  \geq \sqrt{\frac{p^{1/p}}{|p'|^{1/p'}}}\|f\|_p \cdot \sqrt{\frac{q^{1/q}}{|q'|^{1/q'}}}\|g\|_q .
\end{equation}
In fact Barthe~\cite{Barthe98} gave a transportation proof of both inequalities. Since one obtains the EPI by letting $p,q,r\to 1^+$ from above (from Young's inequality) and also by letting $p,q,r\to 1^-$ from below (from the reverse Young's inequality), the EPI and its reverse are equivalent at the limit $p,q,r\to 1$.

\section*{Acknowledgment}
The author would like to thank Tom Courtade, Chandra Nair and Mich\`ele Wigger for their discussions.

%\bibliographystyle{IEEETranS}
%\bibliography{IEEEabrv,paper}

\begin{thebibliography}{10}
%\providecommand{\url}[1]{#1}
%\csname url@samestyle\endcsname
%\providecommand{\newblock}{\relax}
%\providecommand{\bibinfo}[2]{#2}
%\providecommand{\BIBentrySTDinterwordspacing}{\spaceskip=0pt\relax}
%\providecommand{\BIBentryALTinterwordstretchfactor}{4}
%\providecommand{\BIBentryALTinterwordspacing}{\spaceskip=\fontdimen2\font plus
%\BIBentryALTinterwordstretchfactor\fontdimen3\font minus
%  \fontdimen4\font\relax}
%\providecommand{\BIBforeignlanguage}[2]{{%
%\expandafter\ifx\csname l@#1\endcsname\relax
%\typeout{** WARNING: IEEEtranS.bst: No hyphenation pattern has been}%
%\typeout{** loaded for the language `#1'. Using the pattern for}%
%\typeout{** the default language instead.}%
%\else
%\language=\csname l@#1\endcsname
%\fi
%#2}}
%\providecommand{\BIBdecl}{\relax}
%\BIBdecl

\bibitem{Barthe98}
F.~Barthe, ``Optimal {Young}'s inequality and its converse: {A} simple proof,''
  \emph{GAFA, Geom. funct. anal.}, vol.~8, no.~2, pp. 234--242, 1998.

\bibitem{Bryc95}
W.~Bryc, \emph{The Normal Distribution - Characterizations with Applications},
  ser. Lecture Notes in Statistics.\hskip 1em plus 0.5em minus 0.4em\relax
  Springer, 1995, vol. 100.

\bibitem{Courtade17}
T.~A. Courtade, ``Links between the logarithmic {S}obolev inequality and the
  convolution inequalities for entropy and {F}isher information,'' 2017, draft
  available at https://arxiv.org/abs/1608.05431.

\bibitem{CoverThomas06}
T.~M. Cover and J.~A. Thomas, \emph{Elements of Information Theory},
  2nd~ed.\hskip 1em plus 0.5em minus 0.4em\relax John Wiley \& Sons, 2006.

\bibitem{DemboCoverThomas91}
A.~Dembo, T.~M. Cover, and J.~A. Thomas, ``Information theoretic
  inequalities,'' \emph{{IEEE} Trans. Inf. Theory}, vol.~37, no.~6, pp.
  1501--1518, Nov. 1991.

\bibitem{GengNair14}
Y.~Geng and C.~Nair, ``The capacity region of the two-receiver {Gaussian}
  vector broadcast channel with private and common messages,'' \emph{{IEEE}
  Trans. Inf. Theory}, vol.~60, no.~4, pp. 2087--2104, Apr. 2014.

\bibitem{GuoShamaiVerdu06}
D.~Guo, S.~Shamai~(Shitz), and S.~Verd\'u, ``Proof of entropy power
  inequalities via {MMSE},'' in \emph{Proc. {IEEE} Int. Symp. Information
  Theory}, Seattle, USA, Jul. 2006, pp. 1011--1015.

\bibitem{Knothe57}
H.~Knothe, ``Contributions to the theory of convex bodies,'' \emph{Michigan
  Math. J.}, vol.~4, pp. 39--52, 1957.

\bibitem{lieb78}
E.~H. Lieb, ``Proof of an entropy conjecture of {Wehrl},'' \emph{Commun. Math.
  Phys.}, vol.~62, pp. 35--41, 1978.

\bibitem{PolyanskiyWu16}
Y.~Polyanskiy and Y.~Wu, ``Wasserstein continuity of entropy and outer bounds
  for interference channels,'' \emph{{IEEE} Trans. Inf. Theory}, vol.~62,
  no.~7, pp. 3992--4002, Jul. 2016.

\bibitem{RioulCosta16}
O.~Rioul and M.~H.~M. Costa, ``On some almost properties,'' in \emph{IEEE
  Information Theory and Applications Workshop (ITA 2016)}, San Diego, USA,
  Feb. 2016.

\bibitem{Rioul11}
O.~Rioul, ``Information theoretic proofs of entropy power inequalities,''
  \emph{{IEEE} Trans. Inf. Theory}, vol.~57, no.~1, pp. 33--55, Jan. 2011.

\bibitem{Rioul17b}
------, ``At every corner: {D}etermining corner points of two-user {G}aussian
  interference channels,'' in \emph{IEEE International Conference on
  Communications (ICC'17)}, Paris, France, May 21--25, 2017.

\bibitem{Rioul17}
------, ``Yet another proof of the entropy power inequality,'' \emph{{IEEE}
  Trans. Inf. Theory}, to appear., draft available at
  https://arxiv.org/abs/1606.05969.

\bibitem{shannon48}
C.~E. Shannon, ``A mathematical theory of communication,'' \emph{Bell Syst.
  Tech. J.}, vol.~27, pp. 623--656, Oct. 1948.

\bibitem{stam59}
A.~J. Stam, ``Some inequalities satisfied by the quantities of information of
  {Fisher} and {Shannon},'' \emph{Information and Control}, vol.~2, pp.
  101--112, Jun. 1959.

\bibitem{Villani08}
C.~Villani, \emph{Optimal Transport: {Old} and New}, ser. Science and Business
  Media.\hskip 1em plus 0.5em minus 0.4em\relax Springer, 2008, vol. 338.

\bibitem{WangMadiman13}
L.~Wang and M.~Madiman, ``A new approach to the entropy power inequality, via
  rearrangements,'' in \emph{Proc. {IEEE} Int. Symp. Information Theory},
  Istanbul, Turkey, Jul. 2013, pp. 599--603.

\bibitem{ZamirFeder93}
R.~Zamir and M.~Feder, ``A generalization of information theoretic inequalities
  to linear transformations of independent vector,'' in \emph{Proceedings of
  the 6-th Joint Swedish-Russian International Workshop on Information Theory},
  M\"ole, Sweden, Aug. 1993, pp. 254--258.

\end{thebibliography}

% Generated by IEEEtranS.bst, version: 1.12 (2007/01/11)

\vspace*{0ex}%%% HACK to avoid funny margin problem

\end{document}